\lstdefinestyle{pythonstyle}{
    language=Python,
    basicstyle=\ttfamily\small,
    keywordstyle=\color{blue}\bfseries,
    commentstyle=\color{gray}\itshape,
    stringstyle=\color{green!60!black},
    showstringspaces=false,
    breaklines=true,
    frame=single,
    numbers=left,
    numberstyle=\tiny\color{gray}
}
\newtheorem{definition}{Definition}
\newtheorem{theorem}{Theorem}
\newtheorem{lemma}{Lemma}
\newtheorem{observation}{Observation}
\title{The Beautiful Deception:\\
How 256 Bits Pretend to be Infinity}
\author{
    Alexander Towell\\
    \texttt{Southern Illinois University Edwardsville}\\
    \texttt{Southern Illinois University Carbondale}\\
    \texttt{atowell@siue.edu, lex@metafunctor.com}
}
\date{\today}
\begin{document}

\maketitle

\begin{abstract}
How do you store infinity in 256 bits? This paper explores the fundamental deception at the heart of computational cryptography: using finite information to simulate infinite randomness. We prove why true random oracles are impossible, then show how lazy evaluation creates a beautiful lie---a finite automaton that successfully pretends to be infinite. We reveal that ``randomness'' in cryptography is actually computational hardness in disguise, demonstrating through Python implementations how 256 bits of entropy can generate sequences indistinguishable from infinite randomness to any computationally bounded observer.
\end{abstract}

\section{Introduction: The Impossible Oracle}

Consider this paradox: cryptographic theory routinely assumes the existence of random oracles~\cite{bellare1993random}---functions that map any input to an infinite sequence of perfectly random bits. Yet here we are, working with computers that can barely store a few terabytes. How can we possibly implement something that requires infinite storage?

The answer is: we can't. And we don't. Instead, we engage in an elaborate and beautiful deception.

This paper tells the story of that deception. We begin by attempting to implement a true random oracle---a class we'll call \texttt{OracleDigest}---and watch it fail in every possible way: running out of memory, unable to persist, impossible to share. This failure is not a bug; it's a constructive proof that true random oracles cannot exist in our computational universe.

But from this impossibility emerges something remarkable: a deterministic construction we'll call \texttt{LazyDigest} that uses just 256 bits of entropy to generate what appears to be an infinite random sequence. The sequence isn't truly random---it has low Kolmogorov complexity and must eventually cycle. Yet to any computationally bounded observer, it's indistinguishable from true randomness.

This is the beautiful deception: we're not generating randomness at all. We're generating computational hardness and calling it randomness.

\section{Mathematical Foundations}

Before we can appreciate the deception, we need to understand what we're trying to fake. This section provides the essential cryptographic definitions that underpin our exploration.

\subsection{Hash Functions and Random Oracles}

\begin{definition}[Probabilistic Polynomial Time (PPT)]
An algorithm $\mathcal{A}$ runs in probabilistic polynomial time if there exists a polynomial $p(n)$ such that for all inputs of length $n$, $\mathcal{A}$ terminates within $p(n)$ steps and may use randomness in its computation.
\end{definition}

\begin{definition}[Cryptographic Hash Function]
A function $h: \{0,1\}^* \to \{0,1\}^n$ is a cryptographic hash function if it satisfies:
\begin{enumerate}
\item \textbf{Pre-image resistance}: For randomly chosen $y \in \{0,1\}^n$,
$$\Pr[\mathcal{A}(y) \to x : h(x) = y] \leq \text{negl}(n)$$

\item \textbf{Second pre-image resistance}: For any $x_1 \in \{0,1\}^*$,
$$\Pr[\mathcal{A}(x_1) \to x_2 : x_2 \neq x_1 \land h(x_1) = h(x_2)] \leq \text{negl}(n)$$

\item \textbf{Collision resistance}: 
$$\Pr[\mathcal{A}() \to (x_1, x_2) : x_1 \neq x_2 \land h(x_1) = h(x_2)] \leq \text{negl}(n)$$
\end{enumerate}
where $\mathcal{A}$ is any PPT adversary and $\text{negl}(n)$ denotes a negligible function.
\end{definition}

\begin{definition}[Pseudorandom Function (PRF)]
A function $F: \{0,1\}^k \times \{0,1\}^n \to \{0,1\}^m$ is a pseudorandom function if for a randomly chosen key $k \in \{0,1\}^k$, no PPT distinguisher can distinguish between $F_k(\cdot)$ and a truly random function with non-negligible advantage.
\end{definition}

\begin{definition}[Random Oracle]
A random oracle is a theoretical function $\mathcal{O}: \{0,1\}^* \to \{0,1\}^\infty$ where:
\begin{enumerate}
\item For each new input $x$, the output $\mathcal{O}(x)$ is drawn uniformly at random from $\{0,1\}^\infty$
\item For repeated queries, $\mathcal{O}(x)$ returns the same value (consistency)
\item For distinct inputs $x_1 \neq x_2$, outputs are statistically independent
\end{enumerate}
\end{definition}

The key difference: hash functions produce finite output, random oracles produce infinite output. This difference is everything.

\subsection{Kolmogorov Complexity and Information Content}

The Kolmogorov complexity $K(s)$ of a string $s$ is the length of the shortest program that outputs $s$. This gives us a formal way to measure information content.

\begin{theorem}[Information Content of Pseudorandom Sequences]
Let $S$ be a sequence generated by a deterministic extended-output function with seed $s$. Then:
$$K(S[0..n]) \leq |s| + |\text{algorithm}| + O(\log n)$$
regardless of how large $n$ becomes.
\end{theorem}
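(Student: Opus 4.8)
The plan is to exhibit an explicit short program that reproduces the prefix $S[0..n]$ and then appeal to the definition of $K$ as the length of the shortest such program. First I would fix a universal reference machine $U$ and describe a candidate program $P_n$ as the concatenation of three ingredients: (i) a fixed-size interpreter/driver that knows how to run an extended-output function and truncate its output stream to a prescribed number of bits; (ii) the source code of the extended-output algorithm, of constant length $|\text{algorithm}|$ independent of $n$; and (iii) the data $(s,n)$. On empty input, $P_n$ decodes $s$ and $n$, runs the algorithm on seed $s$, reads off the first $n+1$ output bits, and halts. By construction $U(P_n) = S[0..n]$, so $K(S[0..n]) \le |P_n|$.

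Next I would bound $|P_n|$. The driver and the algorithm contribute a constant $c = |\text{interpreter}| + |\text{algorithm}|$ that does not depend on $n$; the seed contributes $|s|$ bits; and the integer $n$ contributes $\lceil \log_2(n+1)\rceil$ bits written in binary. Summing gives $|P_n| \le |s| + |\text{algorithm}| + O(\log n)$, which is the claimed bound once the fixed driver cost is absorbed into the $O(1)$ (hence into the $O(\log n)$) term. The point worth stressing in the writeup is that $n$ enters only through its \emph{name}, so the dependence on $n$ is logarithmic no matter how astronomically large $n$ is.

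The one technical subtlety — the only place a naive bit-count is slightly off — is that the concatenation $\langle\text{interpreter}\rangle\,\langle\text{algorithm}\rangle\,\langle s\rangle\,\langle n\rangle$ must be uniquely parseable by $U$, i.e. the machine needs to know where each field ends. I would handle this with standard self-delimiting encodings: encode $n$ with a prefix-free code of length $\log_2 n + 2\log_2 \log_2 n + O(1)$ (e.g. an Elias code), and similarly delimit $|s|$ with $O(\log |s|)$ extra bits; alternatively, since in our setting the seed has a fixed, publicly known length ($256$ bits), hard-code that length into the driver so no delimiter for $s$ is required at all. Either way the overhead is $O(\log n)$, the $\log\log$ terms being dominated, and the (fixed) algorithm description can be made self-delimiting once and for all at constant cost. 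I expect this bookkeeping to be the main obstacle, though it is entirely routine; everything else follows directly from the definition of Kolmogorov complexity.
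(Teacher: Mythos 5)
Your proof is correct and is exactly the standard argument the paper implicitly relies on (the paper states this theorem without giving an explicit proof): exhibit the program $\langle\text{interpreter}\rangle\,\langle\text{algorithm}\rangle\,\langle s\rangle\,\langle n\rangle$ and bound its length, with the $O(\log n)$ term coming solely from naming $n$. Your attention to prefix-free delimiting of the fields is a genuine detail the paper glosses over, but it is routine bookkeeping and changes nothing in the bound.
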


This theorem reveals the fundamental deception: any ``infinite'' sequence generated deterministically from finite seed contains only finite information. A truly random sequence of length $n$ has $K(S) \approx n$, but our deterministically generated sequences have $K(S) \approx |s|$ bits regardless of length.

\subsection{The Cycle Problem}

Any deterministic function with finite state must eventually cycle.

\begin{lemma}[Cycle Length Bounds]
For any deterministic extended-output function using a hash function with $b$-bit internal state:
\begin{itemize}
\item Maximum cycle length: $2^b$
\item Expected cycle length: $\sqrt{\pi \cdot 2^b / 2}$ (by birthday paradox~\cite{katz2014introduction})
\item Minimum cycle length: 1 (degenerate case)
\end{itemize}
\end{lemma}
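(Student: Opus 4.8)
The plan is to reduce everything to the dynamics of a single state-transition map. An extended-output function with $b$-bit internal state produces its stream by maintaining a state $s_t \in \{0,1\}^b$, iterating $s_{t+1} = f(s_t)$ for some fixed deterministic $f : \{0,1\}^b \to \{0,1\}^b$, and emitting output as a function of each $s_t$. Hence the output stream is eventually periodic, and its period is governed by (and bounded by) the eventual period of the state orbit $s_0, f(s_0), f^2(s_0), \dots$; so it suffices to bound the eventual period of $\{f^t(s_0)\}_{t\ge 0}$.

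For the maximum bound I would invoke the pigeonhole principle: among the $2^b+1$ states $s_0, s_1, \dots, s_{2^b}$ there exist indices $i < j \le 2^b$ with $s_i = s_j$. Determinism of $f$ then forces $s_{i+r} = s_{j+r}$ for every $r \ge 0$, so the orbit enters a cycle whose length $p$ divides $j-i$, and in particular $p \le j - i \le 2^b$; this is the familiar ``$\rho$ shape'' of a transient tail feeding into a cycle. For the minimum, I would simply exhibit a degenerate instance: if $f$ has a fixed point $s^\star$ (take $f\equiv s^\star$ constant, or $f=\mathrm{id}$) and $s_0$ lands on it, the cycle has length $1$; length $0$ is impossible because the state space is finite and nonempty, so some state must recur.

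The substantive part is the expected-length claim, which only makes sense in an average-case model, so I would adopt the standard random-mapping (Shannon) heuristic of treating $f$ as a uniformly random function on the $n = 2^b$ states with $s_0$ uniform and independent. Let $\rho$ be the least $k$ with $s_k \in \{s_0,\dots,s_{k-1}\}$; then $\Pr[\rho > k] = \prod_{i=1}^{k}\bigl(1 - i/n\bigr)$, exactly the birthday-collision probability, since each new $s_{t+1}=f(s_t)$ is a fresh uniform draw. Using $1-x \le e^{-x}$ together with a matching lower bound gives $\Pr[\rho > k] = \exp\!\bigl(-k^2/(2n) + O(k^3/n^2)\bigr)$, and summing via $\mathbb{E}[\rho] = \sum_{k \ge 0}\Pr[\rho > k] \approx \int_0^\infty e^{-x^2/(2n)}\,dx = \sqrt{\pi n/2}$ yields the stated $\sqrt{\pi \cdot 2^b/2}$ as the expected number of iterations before the orbit closes up.

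I expect this asymptotic evaluation — controlling the product, justifying the integral (Euler--Maclaurin) approximation, and bounding the error term — to be the main obstacle; everything else is pigeonhole plus a one-line example. I would also flag a caveat for honesty: $\sqrt{\pi n/2}$ is really the expected $\rho$-\emph{length} (transient plus cycle), whereas the expected pure cycle length in the random-mapping model is $\sqrt{\pi n/8}$, a constant factor smaller. The lemma's figure should therefore be read as the order-of-magnitude bound that the birthday paradox supplies, and I would state it as such rather than claim it is the exact mean cycle length.
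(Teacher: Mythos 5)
Your argument is sound, but note that the paper itself offers no proof of this lemma at all---it is stated bare, with only a citation to the birthday paradox---so you are supplying the reasoning the paper leaves implicit rather than diverging from it. Your decomposition is the standard and correct one: pigeonhole on the $2^b$-element state space for the upper bound, a fixed point of the transition map for the degenerate lower bound, and the random-mapping model for the expectation. The computation $\Pr[\rho > k] = \prod_{i=1}^{k}(1 - i/n)$ with $n = 2^b$, summed to $\mathbb{E}[\rho] \approx \int_0^\infty e^{-x^2/(2n)}\,dx = \sqrt{\pi n/2}$, is exactly right, and it is exact (not merely heuristic) for the rho-length in the uniform random-function model, since $f$ evaluated at each freshly visited state is an independent uniform draw. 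Your closing caveat is the most valuable part of the proposal: $\sqrt{\pi \cdot 2^b/2}$ is the expected rho-length (transient tail plus cycle), while the expected pure cycle length of a random mapping is $\sqrt{\pi \cdot 2^b/8}$, smaller by a factor of $2$; the lemma as stated conflates the two, and your reading of it as an order-of-magnitude birthday bound is the defensible one. One further caveat you might add: the paper's own \texttt{LazyDigest} computes $h(\mathrm{seed} \| i)$ by direct indexing rather than by iterating a state map, so the rho analysis applies to iterated constructions (like a hash chain) rather than literally to the indexed construction, whose outputs repeat only in the weaker birthday-collision sense.
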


For SHA-256 with 256-bit state, the expected cycle length is approximately $2^{128}$---astronomically large but finite. We'll see concrete implementations of such functions in the following sections.

With these mathematical foundations established, we now turn to implementation. Our first attempt will be to build a true random oracle---and watch it fail spectacularly.

\section{OracleDigest: A Constructive Proof of Impossibility}

Let's build a true random oracle and watch it fail:

\begin{lstlisting}
class OracleDigest(Digest):
    def __init__(self, input, entropy=None):
        if entropy is None:
            entropy = lambda: hashlib.sha256(os.urandom(32)).digest()
        self.entropy = entropy
        self.cache = {}  # Maps index -> random byte
    
    def __getitem__(self, index):
        if index not in self.cache:
            self.cache[index] = self.entropy()[0]
        return self.cache[index]
\end{lstlisting}

This implementation reveals why true random oracles are impossible:

\subsection{The Four Impossibilities}

\begin{enumerate}
\item \textbf{Memory Unboundedness}: Each new index adds to the cache. Access enough indices and memory is exhausted. The oracle dies.

\item \textbf{Non-Serializability}: The cache contains random values generated at runtime. You cannot save an \texttt{OracleDigest} to disk and restore it later. The oracle is ephemeral.

\item \textbf{Non-Reproducibility}: Each instance generates different random values. You cannot replay computations or debug. The oracle is untestable.

\item \textbf{Non-Distributability}: Different machines get different oracles. You cannot share an oracle across systems. The oracle is isolated.
\end{enumerate}

\texttt{OracleDigest} is not an implementation---it's an impossibility proof. It shows us what we cannot have, motivating what we must build instead.

Having proven the impossibility of a true random oracle, we now turn to the elegant solution: a deterministic function that successfully pretends to be random through the power of lazy evaluation.

\section{LazyDigest: The Beautiful Deception}

If we can't have true randomness, we'll fake it using lazy evaluation---a computational strategy where values are computed on-demand rather than eagerly pre-computed:

\begin{lstlisting}
class LazyDigest(Digest):
    def __init__(self, seed, hash_fn=hashlib.sha256):
        self.seed = seed
        self.hash_fn = hash_fn
    
    def __getitem__(self, index):
        h = self.hash_fn()
        h.update(self.seed)
        h.update(str(index).encode('utf-8'))
        return h.digest()[0]
\end{lstlisting}

This is beautiful in its simplicity. We're computing:
$$\text{LazyDigest}[i] = h(\text{seed} \| i)[0]$$

\subsection{The Core Lie}

\texttt{LazyDigest} perpetrates a fundamental deception:
\begin{itemize}
\item \textbf{Appears}: Infinite random sequence
\item \textbf{Actually}: Deterministic function with 256 bits of state
\item \textbf{Information content}: $K(\text{LazyDigest}) = 256$ bits + constant
\item \textbf{Apparent information}: Infinite
\end{itemize}

We're achieving a compression ratio of infinity---representing unbounded data with bounded information.

\subsection{Why the Deception Works}

The deception succeeds because of computational hardness:

\begin{theorem}[Computational Indistinguishability \cite{goldreich2001foundations}]
If $h$ is a secure PRF, then for any PPT distinguisher $\mathcal{D}$:
$$\left|\Pr[\mathcal{D}(\text{LazyDigest}_s) = 1] - \Pr[\mathcal{D}(R) = 1]\right| \leq \text{negl}(n)$$
where $R$ is a truly random sequence and $s$ is chosen uniformly.
\end{theorem}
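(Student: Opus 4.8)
The plan is to prove this by a direct reduction to the PRF security of $h$, via an essentially two-element hybrid argument. The starting observation is that \texttt{LazyDigest} is nothing but the PRF evaluated at encoded indices and truncated to one byte: writing $h_s(x) := h(s \| x)$, we have $\text{LazyDigest}_s[i] = h_s(\langle i\rangle)[0]$, where $\langle i\rangle$ denotes the decimal-string encoding of the index $i$. Define two distributions over sequences: $H_0$, where the $i$-th byte is $h_s(\langle i\rangle)[0]$ for a uniform seed $s$ (this is exactly $\text{LazyDigest}_s$), and $H_1$, where the $i$-th byte is $f(\langle i\rangle)[0]$ for a truly random function $f$. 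Since the encodings $\langle i\rangle$ are pairwise distinct, a random function assigns independent uniform $n$-bit values to them, so truncating each to its first byte yields independent uniform bytes; hence $H_1$ is distributed exactly as the truly random sequence $R$. It therefore suffices to bound $\left|\Pr[\mathcal{D}(H_0)=1] - \Pr[\mathcal{D}(H_1)=1]\right|$.

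Next I would build the reduction. Given a PPT distinguisher $\mathcal{D}$ that reads (adaptively) at most $q = q(n) = \mathrm{poly}(n)$ positions of its input sequence, construct an oracle machine $\mathcal{D}'$ that attacks the PRF: $\mathcal{D}'$ is given oracle access to a function $g$ that is either $h_s$ for uniform $s$ or a truly random function $f$, and it runs $\mathcal{D}$, answering $\mathcal{D}$'s $j$-th position query $i_j$ with $g(\langle i_j\rangle)[0]$. When $g = h_s$ the view of $\mathcal{D}$ is exactly $H_0$; when $g = f$ it is exactly $H_1$. Because $g$ is a function, repeated queries to the same index are answered consistently in both worlds, so the simulation is perfect. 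Finally $\mathcal{D}'$ echoes $\mathcal{D}$'s output. Then $\mathcal{D}'$ runs in probabilistic polynomial time, and its PRF-distinguishing advantage equals $\left|\Pr[\mathcal{D}(H_0)=1] - \Pr[\mathcal{D}(H_1)=1]\right|$. By the assumed PRF security of $h$ this is $\mathrm{negl}(n)$, which together with $H_1 \equiv R$ gives the claimed bound.

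The main obstacle is not the reduction itself (which is routine) but reconciling the construction with the hypothesis. The \texttt{LazyDigest} code calls a hash function $h:\{0,1\}^*\to\{0,1\}^n$ on the variable-length string $\text{seed}\,\|\,\langle i\rangle$, whereas the PRF definition is stated for a fixed-length domain; the precise assumption being invoked is that the keyed family $s \mapsto h(s\,\|\,\cdot)$ is a secure variable-input-length PRF (a property that holds, e.g., for HMAC under standard assumptions, and is the natural idealization of a hash in the setting of this paper). I would state this reading of ``$h$ is a secure PRF'' explicitly. Two small points then need checking, each dispatchable in a line: (i) the map $i \mapsto \langle i\rangle$ is injective, so distinct indices yield distinct PRF inputs --- true for decimal encodings, and the ``$1$'' versus ``$12$'' prefix concern is moot since strings of different lengths are distinct; and (ii) truncation is advantage-preserving --- the first byte of a uniform $n$-bit string ($n \ge 8$) is a uniform byte, and truncation is a fixed postprocessing map, which cannot increase any distinguisher's advantage.
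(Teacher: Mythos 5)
The paper offers no proof of this theorem at all---it is stated as an imported result with a citation to Goldreich and immediately followed by commentary---so there is nothing to compare your argument against line by line. Your reduction is the standard and correct one: the hybrid $H_1$ is identified with $R$ because a truly random function evaluated at distinct encoded indices and truncated to one byte yields independent uniform bytes, and the oracle machine $\mathcal{D}'$ perfectly simulates either world while inheriting $\mathcal{D}$'s advantage. You have also flagged the one real gap between the code and the hypothesis---that the paper's PRF definition has a fixed-length domain while \texttt{LazyDigest} feeds $h$ a variable-length string, so the assumption actually needed is that $s \mapsto h(s\,\|\,\cdot)$ is a secure variable-input-length PRF---which is exactly the point a careful proof must make explicit and which the paper glosses over entirely.
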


We're not random---we're computationally hard to distinguish from random. This is a weaker guarantee but sufficient for cryptography.

\subsection{The Inevitable Cycle}

Since \texttt{LazyDigest} has finite state, it must eventually repeat:

\begin{observation}
After at most $2^{256}$ queries, \texttt{LazyDigest} must produce a repeated value, entering a cycle. The expected cycle length is approximately $2^{128}$.
\end{observation}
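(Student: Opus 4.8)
\medskip
\noindent\textbf{Proof plan.} First I would make the statement precise by fixing what ``value'' means: the natural object is the full $256$-bit digest $d_i := h(\text{seed} \| i)$ computed on query $i$, before it is truncated to the returned byte $d_i[0]$; there are at most $2^{256}$ possible values of $d_i$. (If one tracks only the byte $d_i[0]$ the claim is trivial and uninteresting---a repeat is forced after $257$ queries---so it is the $256$-bit internal value that matters.) ``Entering a cycle'' is literal in the iterated reading of the construction, where each digest is fed back as the next seed so the outputs are the iterates of $s \mapsto h(s)$ started at $\text{seed}$; there a repeated value is precisely the moment the trajectory closes into the familiar $\rho$-shape (a tail leading into a loop), which is the setting of the earlier Cycle Length Bounds lemma with internal-state width $b = 256$.

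The ``at most $2^{256}$ queries'' half I would prove by pigeonhole, with no cryptographic assumption on $h$ at all: the $2^{256}+1$ digests $d_0, d_1, \dots, d_{2^{256}}$ lie in a set of size at most $2^{256}$, so there are indices $i \neq j$ with $h(\text{seed} \| i) = h(\text{seed} \| j)$---a repeated internal value---equivalently, the map $i \mapsto d_i$ cannot be injective. In the iterated reading the same count caps the cycle length at $2^{256}$, recovering the ``maximum cycle length'' clause of the lemma. This half is fully rigorous and unconditional.

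For ``expected cycle length $\approx 2^{128}$'' I would simply instantiate the Cycle Length Bounds lemma at $b = 256$: under the standard heuristic that $h$ behaves like a uniformly random function, the birthday bound of~\cite{katz2014introduction} gives expected length $\sqrt{\pi \cdot 2^{256}/2} = \sqrt{\pi/2}\cdot 2^{128} \approx 1.25 \cdot 2^{128}$, which is $2^{128}$ up to a small constant factor. No new computation is involved---this is the $b = 256$ instance of a lemma already stated, together with the quoted birthday estimate.

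I expect the only real difficulty to be conceptual rather than computational: the second half lives in the random-function (random-oracle) \emph{model}, not in a theorem about SHA-256 itself. For the concrete fixed function the actual cycle structure is unknown, and in the degenerate case permitted by the lemma it could be as short as $1$; the figure ``$\approx 2^{128}$'' should be read as the expectation over a uniformly random choice of function---the best available prediction for a function with no known exploitable structure. Making that modeling assumption explicit, while stressing that the $2^{256}$ upper bound requires none of it, is the main thing the write-up has to get right.
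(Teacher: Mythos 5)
Your proposal is correct and matches the paper's own (implicit) justification: the paper gives no proof for this observation beyond the earlier Cycle Length Bounds lemma, and your argument is exactly that lemma instantiated at $b=256$ --- unconditional pigeonhole for the $2^{256}$ bound, and the birthday estimate $\sqrt{\pi\cdot 2^{256}/2}\approx 1.25\cdot 2^{128}$ under the explicit random-function heuristic. Your caveat is also well taken and slightly more careful than the paper itself: since \texttt{LazyDigest} as coded computes $h(\text{seed}\|i)$ indexed by $i$ rather than iterating $s\mapsto h(s)$, a repeated value does not literally force the sequence into a cycle, so the ``entering a cycle'' clause holds only under the iterated reading you identify (or must be weakened to ``must produce a collision'').
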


This seems like a fatal flaw, but $2^{128}$ is approximately $10^{38}$. If you queried one billion indices per second, it would take $10^{21}$ years to expect a cycle---far longer than the age of the universe.

The basic LazyDigest construction is elegant but can be enhanced with more sophisticated techniques to extend cycle length and improve security properties.

\section{Extending the Deception: Advanced Constructions}

To extend the cycle length and improve the deception, we can use more sophisticated constructions.

\subsection{Hierarchical Seeding}

Instead of one seed, use a tree of seeds:

\begin{lstlisting}
class HierarchicalLazyDigest:
    def __getitem__(self, index):
        # Three levels: epoch, chunk, position
        epoch = index // (2**40)
        chunk = (index % (2**40)) // (2**20)
        position = index % (2**20)
        
        # Derive hierarchical seeds
        epoch_seed = h(self.master_seed || "epoch" || epoch)
        chunk_seed = h(epoch_seed || "chunk" || chunk)
        value = h(chunk_seed || position)[0]
        
        return value
\end{lstlisting}

This extends the effective period by structuring the state space hierarchically.

\subsection{Sponge Construction}

The sponge construction (used in SHA-3) provides tunable security:

\begin{lstlisting}
class SpongeLazyDigest:
    def __init__(self, seed, capacity=256):
        self.seed = seed
        self.capacity = capacity  # Bits of internal state
        
    def __getitem__(self, index):
        # Initialize sponge state
        state = self.initialize_sponge(self.seed, self.capacity)
        
        # Absorb index
        state = self.absorb(state, index)
        
        # Squeeze out one byte
        return self.squeeze(state, 1)[0]
\end{lstlisting}

Larger capacity means longer cycles but slower generation.

\subsection{Deterministic Rekeying}

Periodically derive new seeds deterministically:

\begin{lstlisting}
class RekeyingLazyDigest:
    def __getitem__(self, index):
        # Rekey every 2^32 indices
        epoch = index // (2**32)
        local_index = index % (2**32)
        
        # Derive epoch key
        if epoch == 0:
            key = self.seed
        else:
            key = h(self.seed || "rekey" || epoch)
        
        return h(key || local_index)[0]
\end{lstlisting}

This prevents even theoretical cycle detection across epochs.

\subsection{Comparison of Constructions}

\begin{table}[h]
\centering
\caption{Comparison of LazyDigest Constructions}
\label{tab:constructions}
\begin{tabular}{|l|c|c|c|c|}
\hline
\textbf{Construction} & \textbf{Cycle Length} & \textbf{Memory} & \textbf{Computation} & \textbf{Security Property} \\
\hline
Basic LazyDigest & $\leq 2^{256}$ & $O(1)$ & $1 \times h$ & Single hash security \\
\hline
Hierarchical & $\approx 2^{316}$ & $O(\log n)$ & $3 \times h$ & Extended state space \\
\hline
Rekeying & Undetectable & $O(k)$ & $2 \times h$ & Forward security \\
\hline
Sponge & $\approx 2^{c}$ & $O(1)$ & $3 \times h$ & Tunable capacity $c$ \\
\hline
Multi-Hash & $\approx 2^{256} \times m$ & $O(1)$ & $1 \times h_i$ & Algorithm rotation \\
\hline
XOR Composite & $\min(h_i) \times m$ & $O(m)$ & $m \times h$ & All-or-nothing security \\
\hline
Composite & Maximum & $O(k + m)$ & $(3+m) \times h$ & Combined benefits \\
\hline
\end{tabular}
\end{table}

Where $h$ denotes a hash operation, $m$ is the number of hash functions, $k$ is the cache size, and $c$ is the sponge capacity.

Our exploration of various LazyDigest constructions reveals deeper connections to fundamental mathematical concepts. We now examine how these implementations relate to broader questions about computability and representation.

\section{Mathematical Perspectives: Uncomputable Reals and Lazy Evaluation}

Our exploration of LazyDigest reveals profound connections to fundamental questions in mathematics and computation. In this section, we explore how random oracles relate to uncomputable real numbers and how lazy evaluation bridges the finite and infinite.

\subsection{Random Oracles and Uncomputable Reals}

There's a profound connection between random oracles and real numbers. Most real numbers are uncomputable---they cannot be generated by any finite algorithm. Consider:

\begin{itemize}
\item The set of computable reals has measure zero in $\mathbb{R}$
\item Almost all real numbers require infinite information to specify
\item For most reals, the shortest ``program'' is the number itself---its infinite digit sequence
\end{itemize}

A random oracle is the cryptographic analog of an uncomputable real:

\begin{theorem}[Random Oracles as Uncomputable Objects]
A true random oracle $\mathcal{O}$ has the following properties:
\begin{enumerate}
\item $K(\mathcal{O}(x)[0..n]) \approx n$ for any input $x$ (maximal Kolmogorov complexity)
\item No finite program can generate $\mathcal{O}$'s outputs
\item The oracle requires infinite information to specify completely
\end{enumerate}
\end{theorem}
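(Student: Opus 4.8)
The plan is to establish each of the three properties in turn, treating them as manifestations of the same underlying fact: a true random oracle samples its outputs from $\{0,1\}^\infty$ with full entropy, so almost every realization is algorithmically incompressible. I will work with a fixed universal Turing machine $U$ for the definition of $K(\cdot)$ and use a probabilistic argument rather than trying to exhibit any particular oracle (which would be self-defeating, since an exhibited object is computable).

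First I would prove property (1), the Kolmogorov complexity bound. Fix an input $x$ and let $S = \mathcal{O}(x)$, which by the definition of a random oracle is uniform on $\{0,1\}^\infty$; hence $S[0..n]$ is uniform on $\{0,1\}^{n+1}$. A standard counting argument shows that for each constant $c$, the number of strings of length $n+1$ with $K(\sigma) < n+1-c$ is at most $2^{n+1-c}-1$, so $\Pr[K(S[0..n]) < n+1-c] < 2^{-c}$. Taking $c = c(n)$ growing slowly (e.g. $c(n) = 2\log n$) and applying Borel--Cantelli, with probability one $K(S[0..n]) \ge n+1 - 2\log n = n - O(\log n)$ for all sufficiently large $n$; combined with the trivial upper bound $K(S[0..n]) \le n + O(\log n)$ this gives $K(\mathcal{O}(x)[0..n]) \approx n$ almost surely. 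I would state the conclusion as holding ``with probability one over the draw of the oracle,'' which is the honest reading of property (1).

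Properties (2) and (3) then follow with little extra work. For (2): if some finite program $P$ computed $\mathcal{O}$'s outputs, then for every $n$ we would have $K(\mathcal{O}(x)[0..n]) \le |P| + |x| + O(\log n) = O(\log n)$, contradicting property (1) for large $n$; so no such $P$ exists (again, with probability one). For (3) I would make ``requires infinite information to specify completely'' precise as: there is no finite string $d$ and decoding procedure such that $\mathcal{O}$ is recoverable from $d$ --- which is just a restatement of (2) at the level of the whole function --- and additionally note that even specifying the single sequence $\mathcal{O}(x)$ for one $x$ already defeats finite description by the same complexity bound.

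The main obstacle is conceptual rather than technical: a random oracle is not a single well-defined object but a distribution, so every one of these statements must be read as ``almost surely over the choice of $\mathcal{O}$,'' and I would add a remark making this explicit, since a literal reading of ``a true random oracle has property $P$'' is vacuous without it (the degenerate all-zeros oracle has $K = O(1)$). A secondary subtlety is the dependence on the choice of universal machine $U$: the $O(\log n)$ and $O(1)$ terms absorb the invariance-theorem constant, so the asymptotic statements are machine-independent, and I would note this in passing. Everything else is the routine incompressibility counting argument, so I would keep those calculations brief.
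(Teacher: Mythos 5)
The paper never actually proves this theorem---it is asserted and then justified only by an informal analogy to uncomputable reals and the measure-theoretic observation that computable reals have measure zero. Your proposal supplies the rigorous argument that the paper gestures at, and it is correct: the counting bound (at most $2^{n+1-c}-1$ strings of length $n+1$ have complexity below $n+1-c$, hence $\Pr[K(S[0..n]) < n+1-c] < 2^{-c}$ for a uniform draw), the choice $c(n)=2\log n$ to make the exceptional probabilities summable, and the Borel--Cantelli step together give $K(\mathcal{O}(x)[0..n]) = n - O(\log n)$ almost surely, and properties (2) and (3) do follow as immediate corollaries exactly as you argue. Your most valuable contribution is the caveat the paper omits entirely: a ``random oracle'' is a distribution over functions, not a single object, so the theorem as literally stated is false for particular realizations (the all-zeros oracle is a legal sample point with $K=O(1)$), and every clause must be read as holding with probability one over the draw of $\mathcal{O}$. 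That precisification, together with your note that the invariance theorem absorbs the choice of universal machine into the $O(\log n)$ slack, turns the paper's informal claim into a theorem that can actually be proved. One minor point worth tightening if you write this up: in the bound for property (2) the term $|x|$ should really be $K(x)$ or simply absorbed into a constant, since $x$ is fixed before $n\to\infty$; this does not affect the contradiction.
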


This is why true random oracles are impossible to implement: they're the cryptographic equivalent of trying to store an uncomputable real number. Just as Chaitin's constant $\Omega$ (the halting probability) is a well-defined real number that no algorithm can compute, a random oracle is a well-defined function that no algorithm can implement.

The measure-theoretic perspective is illuminating:
\begin{itemize}
\item \textbf{Computable reals} (like $\pi$, $e$, $\sqrt{2}$): Measure zero, finite programs
\item \textbf{Uncomputable reals} (almost all reals): Measure one, infinite information
\item \textbf{LazyDigest outputs}: Computable, finite program, appears random
\item \textbf{Random oracle outputs}: Uncomputable, infinite information, truly random
\end{itemize}

We're using computable functions to approximate uncomputable ones---the same deception that lets us use floating-point arithmetic to approximate real analysis.

\subsection{Lazy Evaluation and Mathematical Constants}

The concept of representing infinite objects through finite programs is not unusual---it's the foundation of algorithmic information theory. Consider $\pi$: we can compute any digit of $\pi$ using a finite algorithm (such as the Bailey-Borwein-Plouffe formula for hexadecimal digits). This is lazy evaluation in its purest form:

\begin{itemize}
\item \textbf{Finite description}: The algorithm is a few hundred bytes
\item \textbf{Infinite output}: We can compute arbitrarily many digits
\item \textbf{On-demand computation}: We only compute the digits we observe
\item \textbf{Composability}: We can implement comparison operations like $\pi < x$ for any computable $x$
\end{itemize}

This parallels our \texttt{LazyDigest} exactly. Just as we can perform algebra on $\pi$ (at least linear operations like $\pi + e$ or $3\pi$), we can perform operations on lazy digests. The key insight: the shortest program that describes the data \emph{is} often an example of lazy evaluation.

The difference between $\pi$ and \texttt{LazyDigest} is not structural but semantic:
\begin{itemize}
\item $\pi$ has mathematical meaning---its digits encode geometric relationships
\item \texttt{LazyDigest} has cryptographic meaning---its bytes encode computational hardness
\end{itemize}

Both achieve the same compression: finite program $\to$ infinite sequence. But there's a crucial distinction:
\begin{itemize}
\item $\pi$ and \texttt{LazyDigest} are computable---they have finite Kolmogorov complexity
\item Most real numbers and true random oracles are uncomputable---they have infinite Kolmogorov complexity
\end{itemize}

We're in the measure-zero set of computable functions, pretending to be in the measure-one set of uncomputable ones. Linear operations (addition, XOR) naturally preserve laziness, while non-linear operations (square roots, modular exponentiation) may require new algorithms. This explains why our XOR construction in the next section works so elegantly---XOR is the perfect linear operation for combining lazy cryptographic sequences.

Having explored the mathematical foundations of our deception, we now turn to the practical security principles that make these constructions robust in real-world applications.

\section{Security Pillars: Defense in Depth}

Our constructions embody several fundamental security principles that provide robustness beyond simple cryptographic primitives.

\subsection{Algorithm Diversity (Defense in Depth)}

The XOR construction demonstrates the principle of algorithm diversity---using multiple independent algorithms so that breaking the system requires breaking all of them simultaneously:

\begin{lstlisting}
class LazyXorMultiHash:
    def __getitem__(self, index):
        result = 0
        for hash_fn in [sha256, sha512, sha3_256, blake2b]:
            result ^= hash_fn(seed || index)[0]
        return result
\end{lstlisting}

\begin{theorem}[XOR Security Amplification \cite{maurer2004indistinguishability}]
Let $H = \{h_1, h_2, ..., h_n\}$ be independent hash functions. The XOR construction $h_{XOR}(x) = \bigoplus_{i=1}^{n} h_i(x)$ is secure if at least one $h_i$ is secure.
\end{theorem}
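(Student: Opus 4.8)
The plan is to prove this by a standard hybrid/reduction argument, exploiting the ``one-time pad'' property of XOR: the exclusive-or of a uniformly random string with any independent string is again uniformly random. Without loss of generality (reindexing $H$), suppose $h_j$ is the secure member, i.e.\ $h_j$ is indistinguishable from a truly random function $R$ to every PPT distinguisher. I would write
$$h_{XOR}(x) = h_j(x) \oplus g(x), \qquad g(x) := \bigoplus_{i \neq j} h_i(x),$$
and observe that $g$ is a function only of $h_1,\dots,h_{j-1},h_{j+1},\dots,h_n$, which are independent of $h_j$ by hypothesis.

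Next I would set up the reduction. Given a PPT distinguisher $\mathcal{D}$ that distinguishes $h_{XOR}$ from a random function with advantage $\varepsilon$, construct a distinguisher $\mathcal{D}'$ against $h_j$ as follows: $\mathcal{D}'$ gets oracle access to a function that is either $h_j$ (fresh key) or a truly random function $R$; it samples its own keys for the remaining $h_i$ (in the random-oracle formulation, it lazily simulates them with a cache so repeated queries are answered consistently); whenever $\mathcal{D}$ queries a point $x$, $\mathcal{D}'$ returns $(\text{oracle value at }x)\oplus\bigoplus_{i\neq j}h_i(x)$; finally it outputs whatever $\mathcal{D}$ outputs. Since $n$ is constant (or at most polynomial), $\mathcal{D}'$ is PPT.

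Then I would analyze the two cases. If $\mathcal{D}'$'s oracle is $h_j$, then $\mathcal{D}$ sees exactly $h_{XOR}$. If the oracle is $R$, then for each queried $x$ the returned value is $R(x)\oplus g(x)$; since $(R(x))_x$ are i.i.d.\ uniform and independent of the entire function $g$, conditioning on $g$ leaves these values i.i.d.\ uniform, so $\mathcal{D}$ sees a truly random function. Hence $\mathcal{D}'$ has advantage exactly $\varepsilon$, which by security of $h_j$ is negligible; therefore $h_{XOR}$ is secure. I would close by remarking that the argument transfers verbatim from the PRF setting to the random-oracle/ideal-hash setting, and by noting why independence is essential --- e.g.\ if $h_2 = h_1$ then $h_{XOR}\equiv 0$, which is trivially insecure.

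The main obstacle is the subtle step in the random-function case: arguing rigorously that $g(x)$, although under adversarial influence through the choice of query point $x$, is nonetheless \emph{statistically} independent of $R(x)$, so that XOR randomizes it perfectly and the simulated transcript is identically distributed to that of a random function. Making ``independent hash functions'' precise (independently sampled keys from a PRF family, or independently sampled random functions), ensuring the internal simulation of the other $h_i$ is both faithful and efficient, and checking that the reduction remains polynomial-time as $n$ grows, are the remaining points that require care --- routine once the framework is fixed.
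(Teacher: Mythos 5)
The paper does not actually prove this theorem --- it states it and defers to the cited reference --- so there is no in-paper argument to compare against; your proposal supplies the standard proof, and it is correct for the reading of ``secure'' that the paper's usage and its citation (indistinguishability amplification) support. The reduction is the right one: the decomposition $h_{XOR}=h_j\oplus g$ with $g$ independent of $h_j$, the simulator that lazily/keyed-samples the other $h_i$ and XORs them onto the oracle's answer, and the observation that for any fixed $g$ the map $R\mapsto R\oplus g$ is a bijection on the set of all functions, so $R\oplus g$ is again a uniformly random function (which disposes of adaptivity: the adversary's entire view is a transcript of a uniformly random function, regardless of how queries are chosen). Your identification of the one delicate step --- statistical independence of $R(x)$ from $g$ even though $x$ is adversarially chosen --- and of why independence of the $h_i$ is essential are both on target. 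The one caveat worth adding, since the paper's own Definition of a cryptographic hash function bundles pre-image and collision resistance into ``secure'': your argument establishes only the pseudorandomness/indistinguishability form of the claim, and that is the only form that is actually true. XOR is famously \emph{not} a robust combiner for collision resistance (no short-output black-box combiner for collision resistance exists), so if ``secure'' were read as ``collision resistant'' the theorem as stated would fail. You should state explicitly that the theorem is being proved as a PRF/random-oracle indistinguishability amplification result, which is how the paper uses it for \texttt{LazyXorMultiHash}; with that scoping, your proof is complete and routine to formalize.
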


This provides:
\begin{itemize}
\item \textbf{Hedging against cryptanalysis}: Future attacks may break SHA-2 but not SHA-3
\item \textbf{Quantum resistance}: Different algorithms have different quantum vulnerabilities
\item \textbf{Implementation diversity}: Bugs in one implementation don't compromise the system
\item \textbf{Computational trade-off}: $n$-fold computation for exponential security gain
\end{itemize}

\subsection{Temporal Isolation (Forward Security)}

The rekeying construction provides forward security---compromise of the current state doesn't reveal past outputs:

\begin{lstlisting}
class RekeyingLazyDigest:
    def __getitem__(self, index):
        epoch = index // rekey_interval
        key = derive_epoch_key(epoch)  # One-way derivation
        return h(key || local_index)[0]
\end{lstlisting}

Even if an attacker obtains the key for epoch $n$, they cannot derive keys for epochs $0..n-1$ due to one-way key derivation.

\subsection{Structural Redundancy (Hierarchical Security)}

The hierarchical construction provides security through structural redundancy:

\begin{lstlisting}
class HierarchicalLazyDigest:
    def __getitem__(self, index):
        epoch_seed = h(master || "epoch" || epoch)
        chunk_seed = h(epoch_seed || "chunk" || chunk)
        value = h(chunk_seed || position)[0]
        return value
\end{lstlisting}

This creates multiple security boundaries:
\begin{itemize}
\item Compromise at position level doesn't reveal chunk seed
\item Compromise at chunk level doesn't reveal epoch seed
\item Compromise at epoch level doesn't reveal master seed
\end{itemize}

\subsection{Capacity Reservation (Sponge Security)}

The sponge construction reserves capacity that never leaves the system:

\begin{lstlisting}
class SpongeLazyDigest:
    def __init__(self, seed, capacity=256):
        # 'capacity' bits never exposed to output
        self.capacity = capacity
\end{lstlisting}

This provides:
\begin{itemize}
\item \textbf{State hiding}: Part of the state is never revealed
\item \textbf{Tunable security}: Larger capacity = stronger security
\item \textbf{Collision resistance}: $2^{capacity/2}$ collision resistance
\end{itemize}

\subsection{Compositional Security}

Multiple security pillars can be combined:

\begin{lstlisting}
class UltraSecureLazyDigest:
    def __getitem__(self, index):
        # Combine all security pillars
        xor_hash = self.xor_multi_hash[index]      # Algorithm diversity
        rekeyed = self.rekeying[index]             # Temporal isolation
        hierarchical = self.hierarchical[index]     # Structural redundancy
        sponge = self.sponge[index]                # Capacity reservation
        
        return xor_hash ^ rekeyed ^ hierarchical ^ sponge
\end{lstlisting}

\subsection{The Security Philosophy}

These pillars reflect a fundamental philosophy: \textbf{assume failure and design for resilience}. We don't trust any single:
\begin{itemize}
\item Algorithm (might be broken)
\item Implementation (might have bugs)
\item Time period (might be compromised)
\item Structural level (might be breached)
\end{itemize}

Instead, we create systems where multiple independent failures are required for total compromise. This is the cryptographic equivalent of the Byzantine Generals Problem---achieving security despite potential failures in components.

With our security principles established, we can now develop a rich algebra of operations on lazy digests, mirroring the algebraic operations on mathematical constants.

\section{An Algebra of Operations}

Just as we can perform arithmetic on $\pi$ without computing all its digits, \texttt{LazyDigest} supports rich algebraic operations that preserve laziness. This isn't coincidence---it's a fundamental property of lazy evaluation. The algebra we develop here mirrors the algebra of computable real numbers, but operating in the space of cryptographic sequences rather than mathematical constants.

\subsection{Lazy Operations (Infinite $\to$ Infinite)}

\begin{lstlisting}
class LazyDigest:
    def xor(self, other):
        """XOR with another lazy digest"""
        return LazyXorDigest(self, other)
    
    def slice(self, start, step):
        """Take every step-th element starting at start"""
        return LazySliceDigest(self, start, step)
    
    def transform(self, f):
        """Apply function f to each element"""
        return LazyTransformDigest(self, f)
\end{lstlisting}

These operations compose infinite sequences without materializing them. Note that these are all linear operations---they can be computed element-wise without global knowledge. Non-linear operations (like finding the median of an infinite sequence or computing modular inverses) would require fundamentally different algorithms or may be impossible to compute lazily.

\subsection{Concrete Operations (Infinite $\to$ Finite)}

\begin{lstlisting}
class LazyDigest:
    def truncate(self, n):
        """Take first n bytes (projection)"""
        return Digest(bytes([self[i] for i in range(n)]))
    
    def sample(self, indices):
        """Extract specific positions"""
        return bytes([self[i] for i in indices])
    
    def fold(self, f, init, n):
        """Reduce first n elements"""
        result = init
        for i in range(n):
            result = f(result, self[i])
        return result
\end{lstlisting}

These operations project infinite sequences to finite values.

Having developed our theoretical framework and operational algebra, we now explore practical applications where the LazyDigest deception provides real value.

\section{Applications: Where the Deception Matters}

\subsection{Deterministic Test Data}

Generate reproducible infinite test streams:

\begin{lstlisting}
def test_stream(test_id):
    """Each test gets its own infinite reproducible stream"""
    seed_str = "test_" + str(test_id)
    seed = hashlib.sha256(seed_str.encode()).digest()
    return LazyDigest(seed)

# Same test always gets same data
stream = test_stream("unit_test_1")
test_data = stream.truncate(1024)  # Get 1KB of test data
\end{lstlisting}

\subsection{Memory-Hard Key Derivation}

Force memory access patterns for password hashing:

\begin{lstlisting}
def memory_hard_kdf(password, salt, memory_cost):
    seed = hashlib.sha256(password + salt).digest()
    lazy = LazyDigest(seed)
    
    # Force random memory accesses
    indices = generate_access_pattern(memory_cost)
    memory_data = lazy.sample(indices)
    
    return hashlib.sha256(memory_data).digest()
\end{lstlisting}

\subsection{Blockchain Proof-of-Work}

Model mining as searching the output space:

\begin{lstlisting}
def find_proof_of_work(block_header, difficulty):
    base_seed = hashlib.sha256(block_header).digest()
    
    for nonce in range(2**32):
        # Create candidate by hashing header + nonce
        h = hashlib.sha256()
        h.update(base_seed)
        h.update(nonce.to_bytes(4, 'big'))
        candidate = h.digest()
        
        if leading_zeros(candidate) >= difficulty:
            return nonce
    
    return None  # No solution found
\end{lstlisting}

\section{The Philosophical Core: Randomness as Computational Hardness}

Our journey from OracleDigest to LazyDigest reveals fundamental questions about the nature of randomness, existence, and computation. This section explores the philosophical implications of our cryptographic deception.

\subsection{The Constructivist Challenge: Do These Objects Even Exist?}

Here's a radical question: Are real numbers even a coherent concept? Many mathematicians---constructivists and finitists---argue they're not. From Kronecker's ``God made the integers, all else is the work of man'' to modern constructive mathematics, there's a compelling case that:

\begin{itemize}
\item Only finite objects that can be explicitly constructed are mathematically coherent
\item The real numbers, as a completed infinity, are an incoherent fantasy
\item Most of classical mathematics deals with objects that cannot exist in any meaningful sense
\end{itemize}

The Church-Turing thesis suggests a precise boundary: \textbf{only objects computable on a Universal Turing Machine have potential existence}. Everything else is mere symbol manipulation without referent.

From this perspective, random oracles and most real numbers share the same ontological status---they're both incoherent:

\begin{theorem}[The Constructivist Equivalence]
Under the constructivist interpretation:
\begin{enumerate}
\item An uncomputable real number cannot be said to exist
\item A true random oracle cannot be said to exist
\item Both are equally incoherent mathematical fantasies
\item Only their computable approximations (like $\pi$ and \texttt{LazyDigest}) have potential existence
\end{enumerate}
\end{theorem}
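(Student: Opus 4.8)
The plan is to treat this not as a computation but as a single inference applied twice, conditional on one explicitly stated hypothesis. First I would fix the constructivist proxy for existence that the preceding discussion has been circling: an object \emph{has potential existence} exactly when it is computable on a Universal Turing Machine, i.e.\ when some finite program produces, on demand, every finite prefix of it. Call this hypothesis (H). The whole theorem is conditional on (H), and once (H) is on the table each clause is downstream of it.

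Next I would dispatch clauses 1 and 2, the two ``rejections.'' Clause 1 is immediate: an uncomputable real $r$ admits, by definition, no Turing machine outputting its digit sequence, and since the computable reals are countable they form a Lebesgue-null subset of $\mathbb{R}$ (as already noted), so (H) excludes all but a measure-zero set of reals. Clause 2 is the substantive step: by the earlier Theorem on Random Oracles as Uncomputable Objects, a true random oracle $\mathcal{O}$ satisfies $K(\mathcal{O}(x)[0..n]) \approx n$ for every input $x$ and is generated by no finite program --- and that last clause is literally the negation of the antecedent of (H). One subtlety I would make explicit here: ``the random oracle'' names a distribution over functions $\{0,1\}^* \to \{0,1\}^\infty$, so the precise statement is that with probability $1$ the sampled function is uncomputable, hence the conclusion holds for almost every realization.

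Clauses 3 and 4 then fall out. The equivalence (clause 3) is just the observation that the rejections established for clauses 1 and 2 are the \emph{same} inference from (H), differing only in which infinite object is plugged in; nothing at the level (H) operates distinguishes the oracle from a generic uncomputable real. Clause 4 is the converse direction: $\pi$ via a Bailey--Borwein--Plouffe-style digit extractor, and $\texttt{LazyDigest}$ via $\texttt{LazyDigest}[i] = h(\text{seed} \,\|\, i)[0]$, each exhibit an explicit finite program producing arbitrary prefixes, so each satisfies the antecedent of (H); by the Information Content theorem their length-$n$ prefixes have Kolmogorov complexity only $O(\log n)$ above a constant, confirming they inhabit the computable, measure-zero class rather than the uncomputable one.

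The main obstacle is not any individual step but the legitimacy of (H): ``cannot be said to exist'' is an informal, philosophically contested predicate, and the theorem is only as strong as the claim that computability-on-a-UTM faithfully captures the constructivist and finitist notion of coherence. I would therefore spend the bulk of the write-up defending the faithfulness of this proxy --- reading the existential quantifier constructively as a demand for a witness-producing construction, for which the Turing machine is the canonical formalization --- and flagging honestly that a reader who rejects (H) thereby rejects the theorem, whereas a reader who grants (H) obtains clauses 1--4 essentially for free.
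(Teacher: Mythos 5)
The paper offers no proof of this theorem at all: it is stated as a philosophical framing device and followed immediately by informal commentary, so there is no argument of record to compare yours against. Your proposal supplies the missing argument in essentially the only defensible way --- by promoting the implicit constructivist identification (existence $=$ computability on a UTM) to an explicit hypothesis (H) and observing that each of the four clauses is then a direct instantiation of it. Your handling of clause 2 is actually more careful than the paper's: noting that a ``random oracle'' is a distribution over functions $\{0,1\}^* \to \{0,1\}^\infty$ and that the uncomputability claim holds with probability $1$ (since the computable functions are countable) repairs a real imprecision in the paper's earlier ``Random Oracles as Uncomputable Objects'' theorem, which quantifies over a single oracle as if it were a fixed object. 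Likewise your citation of the BBP digit extractor for $\pi$ and the explicit formula $h(\text{seed} \,\|\, i)[0]$ for \texttt{LazyDigest} gives clause 4 actual witnesses rather than assertions. The one thing to keep front and center --- which you already do --- is that clauses 1--3 are not mathematical consequences of anything but (H) itself; ``cannot be said to exist'' has no formal content beyond the hypothesis, so the theorem is a definition wearing a theorem's clothing, and your write-up is honest about exactly that. No gap; if anything, your version is the proof the paper should have included.
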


This isn't just philosophy---it has practical implications. We cannot implement the incoherent:
\begin{itemize}
\item We cannot store a ``random'' real number---only rational approximations
\item We cannot implement a true random oracle---only \texttt{LazyDigest}
\item We cannot compute with actual infinities---only finite approximations
\end{itemize}

\texttt{LazyDigest} thus represents something profound: it's the constructivist's answer to the random oracle. Instead of pretending an incoherent object exists, we replace it with something that does exist---a finite program that generates apparently random output on demand.

The ``deception'' of \texttt{LazyDigest} might be more honest than classical mathematics. We're not claiming to have an infinite random sequence; we're openly admitting we have a finite program that produces one. We're not working with incoherent infinities; we're working with coherent finite processes that can generate unbounded output.

In this light, cryptography is inherently constructivist. We can only implement what we can compute, and what we can compute is precisely what can exist on a UTM. The boundary of cryptographic possibility is the boundary of computational existence.

\subsection{Two Theories of Randomness}

We have two incompatible definitions of randomness:

\begin{enumerate}
\item \textbf{Information-theoretic}: A sequence is random if it has high Kolmogorov complexity
\item \textbf{Computational}: A sequence is random if no efficient algorithm can distinguish it from truly random
\end{enumerate}

\texttt{LazyDigest} fails the first definition catastrophically---it has tiny Kolmogorov complexity. But it passes the second definition perfectly (assuming secure hash functions).

\subsection{The P $\neq$ NP Bet}

Every use of \texttt{LazyDigest} is implicitly betting that P $\neq$ NP:

\begin{theorem}
If P = NP, then \texttt{LazyDigest} can be distinguished from random in polynomial time.
\end{theorem}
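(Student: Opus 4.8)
The plan is to use the hypothesis $\mathrm{P} = \mathrm{NP}$ to build an explicit polynomial-time distinguisher $\mathcal{D}$ that separates $\text{LazyDigest}_s$, for uniformly random $s \in \{0,1\}^n$, from a truly random byte sequence $R$ with advantage negligibly close to $1$ --- which directly contradicts the computational-indistinguishability guarantee established earlier. The idea is to \emph{test consistency with the generating relation}: since $\text{LazyDigest}_s[i] = h(s \,\|\, i)[0]$, any sequence that is actually a lazy digest admits a short seed that explains \emph{every} observed byte, whereas a random sequence should admit no such seed. Deciding ``does a consistent seed exist?'' is an $\mathrm{NP}$ question (the seed is the witness, verification is just a bounded number of hash evaluations), so under $\mathrm{P} = \mathrm{NP}$ it becomes tractable.

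Concretely, $\mathcal{D}$ does the following. (i) Query positions $0,1,\ldots,k-1$ of the sequence in front of it, obtaining bytes $b_0,\ldots,b_{k-1}$, where $k$ is a fixed polynomial in $n$; taking $k = n$ bytes already suffices. (ii) Construct the Boolean circuit $C(s)$ that evaluates $h$ on each bounded-length input $s \,\|\, i$ for $i < k$ and checks whether its first byte equals $b_i$; this circuit has size $\mathrm{poly}(n)$ and is built in polynomial time from the known description of $h$. (iii) Decide whether $C$ is satisfiable --- by hypothesis $\textsc{Sat} \in \mathrm{P}$, so this costs $\mathrm{poly}(n)$ time. (iv) Output $1$ (``lazy digest'') iff $C$ is satisfiable, and $0$ otherwise. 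The whole procedure is uniform and runs in polynomial time.

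For correctness: on input $\text{LazyDigest}_s$ the circuit $C$ is satisfied by $s$ itself, so $\Pr[\mathcal{D}(\text{LazyDigest}_s)=1]=1$. On input $R$, fix any candidate seed $s'$; the probability that the $k$ observed bytes match $h(s'\|0)[0],\ldots,h(s'\|(k-1))[0]$ is $2^{-8k}$, and a union bound over all $2^n$ seeds yields
$$
\Pr[C \text{ is satisfiable}] \;\le\; 2^{\,n-8k} \;\le\; 2^{-n}
$$
once $k \ge n/4$. Hence $\mathcal{D}$'s advantage is at least $1 - 2^{-n}$, which is non-negligible. The step demanding the most care is this last balancing act: $k$ must be large enough that a random string is, with overwhelming probability, not seed-explicable, yet small enough (polynomial) that both the query count and the \textsc{Sat} instance stay polynomial --- and $k = \Theta(n)$ threads this needle comfortably. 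A secondary point worth making explicit is that we never need $h$ on its full domain $\{0,1\}^*$: only its restriction to inputs of length $n + O(\log k)$, which is a circuit of size polynomial in $n$, so the reduction to \textsc{Sat} is genuinely polynomial. One can also note the hypothesis can be weakened to $\mathrm{NP} \subseteq \mathrm{BPP}$ (allowing a randomized $\mathcal{D}$) or even to the mere nonexistence of one-way functions, but $\mathrm{P} = \mathrm{NP}$ is the cleanest route and is all the statement claims.
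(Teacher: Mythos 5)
Your proposal is correct and follows essentially the same route as the paper's own (much terser) argument: use $\mathrm{P}=\mathrm{NP}$ to decide, via an NP consistency question, whether some seed $s$ explains all observed bytes $S[i]=h(s\,\|\,i)[0]$. You add the quantitative details the paper omits --- the choice $k=\Theta(n)$ of query count, the explicit reduction to \textsc{Sat}, and the union bound $2^{n-8k}$ showing a truly random string is almost never seed-explicable --- which makes the sketch fully rigorous.
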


\begin{proof}
If P = NP, we can invert the hash function $h$ in polynomial time. Given a sequence $S$, we can check if there exists a seed $s$ such that $S[i] = h(s \| i)[0]$ for all observed $i$. This distinguishes \texttt{LazyDigest} from random.
\end{proof}

We're not generating randomness---we're generating a computationally hard problem and hoping nobody can solve it.

\subsection{The Beautiful Paradox}

\texttt{LazyDigest} embodies a beautiful paradox:

\begin{itemize}
\item \textbf{Simple}: Just hash seed concatenated with index
\item \textbf{Complex}: Output appears completely random
\item \textbf{Finite}: Only 256 bits of true information
\item \textbf{Infinite}: Can generate unbounded output
\item \textbf{Deterministic}: Same seed always gives same sequence
\item \textbf{Unpredictable}: Cannot predict next value without seed
\end{itemize}

This paradox is the heart of modern cryptography: simple deterministic processes that appear complex and random to bounded observers.

\subsection{Entropy, Time's Arrow, and Computational Boundedness}

Here's a radical proposition: entropy and randomness are not objective properties of systems but subjective experiences of computationally bounded observers. Consider Laplace's demon---a hypothetical being with unlimited computational power and perfect information:

\begin{quote}
``An intelligence which could know all forces by which nature is animated... would embrace in the same formula the movements of the greatest bodies of the universe and those of the lightest atom; for it, nothing would be uncertain and the future, as the past, would be present to its eyes.'' ---Pierre-Simon Laplace, 1814
\end{quote}

To Laplace's demon:
\begin{itemize}
\item \texttt{LazyDigest} is trivially non-random---just compute $h(\text{seed} \| i)$
\item A broken egg could be reversed---just invert the dynamics
\item The future and past are equally determined and accessible
\item There is no entropy, only deterministic evolution
\end{itemize}

But we are not Laplace's demons. We are computationally bounded, and this boundedness creates our entire experience of reality:

\begin{theorem}[The Computational Arrow of Time]
For computationally bounded observers:
\begin{enumerate}
\item The past is fixed (we remember it)
\item The future is uncertain (we cannot compute it fast enough)
\item Entropy increases (we cannot invert the dynamics)
\item Randomness exists (we cannot distinguish from true random)
\end{enumerate}
\end{theorem}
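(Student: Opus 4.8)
The plan is to make ``computationally bounded observer'' precise as a PPT machine $\mathcal{A}$ whose experience of the world is a deterministic trajectory $x_0, x_1, x_2, \ldots$ with $x_{t+1} = \Phi(x_t)$, where the transition $\Phi$ is instantiated by the same one-way, PRF-like machinery that drives \texttt{LazyDigest}; concretely one may take $x_t = h(\text{seed} \| t)$ so that ``running time forward one tick'' is a single hash call. Under this model each of the four clauses becomes a claim about what $\mathcal{A}$ can or cannot do in polynomial time, and each reduces to a property already established in the paper. First I would fix notation for $\mathcal{A}$'s \emph{record} $R_t = (x_0, \ldots, x_t)$ and for its \emph{residual uncertainty} $H_t$ about the global state, which I will define as a resource-bounded (HILL-style, computational) entropy: the largest $k$ such that, conditioned on $R_t$, the rest of the trajectory is PPT-indistinguishable from one carrying $k$ fresh random bits. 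Choosing this notion --- rather than plain Shannon or Kolmogorov entropy, either of which would make the theorem false or vacuous --- is the key definitional move.

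Clauses (1) and (2) I would handle together, as the two faces of a \emph{computational asymmetry} in $\Phi$. Forward evaluation $x_t \mapsto x_{t+1}$ costs one hash call, so any already-observed prefix is cheap to recompute and store: $R_t$ is stable and retrievable in $\mathrm{poly}(t)$ time --- ``the past is fixed because we remember it.'' The reverse questions are hard: predicting $x_{t+1}$ from $R_t$ without the seed is precisely extending a \texttt{LazyDigest} stream, which the Computational Indistinguishability theorem says no PPT $\mathcal{D}$ does with more than $\mathrm{negl}(n)$ advantage over a coin flip; and reconstructing $x_{t-1}$ from $x_t$ is a preimage query, blocked by pre-image resistance. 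Thus the forward map lies in $\mathbf{P}$ while the predictive/inverse map (conjecturally) does not, and this gap \emph{is} the arrow. I would also record the conditional sharpening that falls out of the paper's ``$\mathbf{P} = \mathbf{NP} \Rightarrow$ \texttt{LazyDigest} distinguishable'' theorem: on this account time has a direction precisely because $\mathbf{P} \neq \mathbf{NP}$.

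Clause (4) is then almost immediate: let $\texttt{LazyDigest}_s$ with uniform $s$ play the role of the ambient process; by the Computational Indistinguishability theorem no PPT observer separates its output from a truly random $R$ by more than $\mathrm{negl}(n)$, so ``randomness'' is a well-defined operational category for $\mathcal{A}$ even though, by the Information-Content theorem and the cycle lemma, the object has $K \approx |s|$ and must eventually repeat. Clause (3), that entropy increases, is the step I expect to be the real obstacle. The target inequality is $H_{t+1} \ge H_t - \mathrm{negl}(n)$ --- monotone non-decrease of the bounded-observer entropy --- and the route I would try is a reduction: a PPT machine that, handed $R_{t+1}$, witnessed a genuine drop in computational entropy could, by composing with the publicly known map $\Phi$, be turned into a distinguisher or preimage-finder against $h$ evaluated at $R_t$, contradicting the PRF / one-wayness hypothesis. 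The delicate points are (i) pinning down the right computational-entropy notion and proving a chain rule for it across $\Phi$, which is efficiently computable forward but many-to-one in reverse; (ii) carrying the $O(\log t)$ description overhead from the Information-Content theorem through the bound so it does not quietly accumulate into something non-negligible; and (iii) being explicit that the entire argument is conditioned on the (unproven) hardness of $h$, so that, exactly like every deployment of \texttt{LazyDigest}, the Computational Arrow of Time is here a theorem only relative to the $\mathbf{P} \neq \mathbf{NP}$ bet.
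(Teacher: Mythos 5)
The first thing to say is that the paper offers no proof of this statement at all: the ``theorem'' appears in the philosophical section as an interpretive claim, supported only by analogies (the broken egg, trapdoor functions) rather than by any argument. So your proposal is not an alternative route to the paper's proof --- it is the only proof on the table, and it must be judged on its own. On that footing, your treatment of clauses (1), (2), and (4) is sound given your modeling choice: once you decide the world-trajectory is $x_t = h(\mathrm{seed} \,\|\, t)$, clause (4) is literally the paper's Computational Indistinguishability theorem, and clauses (1)--(2) are the asymmetry between evaluating $h$ forward (polynomial time) and predicting or inverting it (blocked by PRF security and preimage resistance, conditional on those assumptions). Your remark that the whole construction is conditional on the hardness of $h$ --- and hence, via the paper's $\mathbf{P} = \mathbf{NP}$ theorem, on the $\mathbf{P} \neq \mathbf{NP}$ bet --- is exactly the right caveat and is more honest than the paper's unconditional phrasing.

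The genuine gap is clause (3), and you have correctly located it but underestimated how badly it fails in your own model. First, your target inequality $H_{t+1} \ge H_t - \mathrm{negl}(n)$ establishes approximate \emph{non-decrease}, not increase; ``entropy increases'' needs a strict gain from somewhere, and your trajectory is a deterministic function of a fixed $256$-bit seed, so by the paper's own Information-Content theorem every candidate measure of the observer's uncertainty is bounded above by $|s| + O(\log t)$ for all time. Monotone increase under a uniform finite ceiling forces the increments to vanish, which undercuts the clause as stated. Second, the reduction you sketch --- an efficient witness of an entropy drop yields a distinguisher --- presupposes a chain rule for HILL-type computational entropy across the deterministic, many-to-one map $\Phi$, and such chain rules are precisely the statements known to be delicate or false in general; you cannot cite one, and the paper gives you nothing from which to derive one. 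What can be salvaged is a weaker, defensible claim: the observer's computational entropy about the unseen future does not \emph{efficiently decrease}, because any PPT procedure that reduced it would contradict PRF security. If you restate clause (3) that way, your reduction strategy goes through; as written, the clause is not a theorem in your model, and the paper never notices because it never attempts a proof.
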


This isn't just philosophy---it's the foundation of cryptography. A hash function is essentially a ``fast-forward only'' time machine:
\begin{itemize}
\item \textbf{Forward} (past $\to$ future): $h(x)$ is easy to compute
\item \textbf{Backward} (future $\to$ past): Finding $x$ given $h(x)$ is computationally infeasible
\end{itemize}

\subsubsection{The Broken Egg Principle}

A broken egg is a perfect physical analog of a cryptographic hash:
\begin{itemize}
\item \textbf{Breaking} (forward): Easy, just drop it
\item \textbf{Unbreaking} (backward): Computationally infeasible for bounded beings
\item \textbf{Information preserved}: In principle, all information about the original egg exists in the broken pieces
\item \textbf{Practical irreversibility}: We cannot compute the reverse transformation
\end{itemize}

The second law of thermodynamics might not be a law of physics but a statement about computational complexity. Entropy increases not because information is destroyed but because we cannot compute the inverse transformations needed to decrease it.

\subsubsection{Trapdoor Functions as Temporal Asymmetry}

Trapdoor functions formalize this temporal asymmetry:
$$f: X \to Y \text{ where } f \text{ is easy but } f^{-1} \text{ is hard without secret } k$$

This is precisely the structure of our temporal experience:
\begin{itemize}
\item Moving forward in time: Easy (just wait)
\item Moving backward in time: Computationally infeasible
\item The ``trapdoor'': Would be the demon's unlimited computation
\end{itemize}

\texttt{LazyDigest} creates an artificial arrow of time. Given index $i$, we can easily compute future indices but cannot determine what came before without the seed. We've created a computational universe with its own entropy and irreversibility.

\subsubsection{Randomness as Computational Relativity}

Just as Einstein showed that simultaneity is relative to the observer's reference frame, we're suggesting that randomness is relative to the observer's computational frame:

\begin{definition}[Computational Relativity of Randomness]
A sequence $S$ is random relative to computational class $\mathcal{C}$ if no algorithm in $\mathcal{C}$ can distinguish $S$ from truly random with non-negligible advantage.
\end{definition}

This means:
\begin{itemize}
\item To a classical computer: Quantum states appear random
\item To a bounded computer: \texttt{LazyDigest} appears random
\item To an unbounded computer: Nothing appears random
\item To us: The universe might appear random simply because we're computationally weaker than the process generating it
\end{itemize}

The profound implication: If the universe itself is computational, then ``true'' randomness might not exist at all. What we call randomness might always be computational hardness relative to our bounded perspective. The universe might be entirely deterministic, appearing random only because we're trapped in our computational reference frame, unable to compute fast enough to see the determinism.

In this view, \texttt{LazyDigest} isn't simulating randomness---it's demonstrating what randomness actually is: deterministic processes that exceed our computational grasp.

\subsubsection{The Library of Babel and the Specification Problem}

Borges's Library of Babel contains every possible book---and this reveals a fundamental computational asymmetry that mirrors our universe:

\begin{itemize}
\item \textbf{Generation}: Trivially easy to lazily generate all books (just enumerate all strings)
\item \textbf{Specification}: Impossibly hard to find any particular meaningful book
\item \textbf{Location}: Given a book, nearly impossible to determine its index in the enumeration
\end{itemize}

This is precisely the structure of \texttt{LazyDigest} and perhaps reality itself:

\begin{theorem}[The Generation-Specification Asymmetry]
For many computational processes:
\begin{enumerate}
\item Generation is easy: $O(n)$ to generate $n$ elements lazily
\item Specification is hard: $O(2^n)$ to find a specific element
\item Localization is hard: Given an element, determining its index is computationally infeasible
\end{enumerate}
\end{theorem}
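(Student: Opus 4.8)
The plan is to prove the three-part Generation-Specification Asymmetry by treating each clause as a separate complexity claim and supplying the appropriate model of computation for each. For clause (1), I would make the ``generation is easy'' statement precise by noting that in the Library of Babel over an alphabet $\Sigma$, the enumeration map $i \mapsto (\text{the } i\text{-th string in shortlex order})$ is computable in time linear in the output length: producing all strings up to length $\ell$ (hence roughly $n = |\Sigma|^{\ell}$ elements) costs $O(n)$ symbol-writes, since each successive string is obtained from its predecessor by an amortized-$O(1)$ increment. The analogous statement for \texttt{LazyDigest} is immediate from its definition: computing $S[0..n-1]$ costs $n$ hash evaluations, i.e. $O(n)$ work with the hash treated as unit cost. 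So clause (1) is essentially a definitional unwinding plus the standard amortized-counter argument.

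For clause (2), ``specification is hard,'' the claim is that locating a \emph{particular} target element requires $\Theta(2^n)$ work in the worst case when the only interface to the enumeration is black-box forward generation (or, for \texttt{LazyDigest}, black-box evaluation of $S[\cdot]$ without the seed). Here I would invoke the impossibility results already established in the excerpt: by the Computational Indistinguishability theorem, no PPT algorithm can do better than brute-force search over the $2^{256}$-element seed space (equivalently over the $2^{|s|}$ exponentially-many candidate indices), because any shortcut would yield a distinguisher. The Library-of-Babel side is information-theoretic rather than cryptographic: a ``meaningful'' book carries $\Theta(n)$ bits of specification, so by a counting/pigeonhole argument any procedure that pins it down must examine $\Omega(2^{n})$ of the $|\Sigma|^{n}$ candidates in the worst case. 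I would phrase this as a lower bound against the natural query model and note the matching $O(2^n)$ upper bound (exhaustive enumerate-and-test), giving $\Theta(2^n)$.

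Clause (3), ``localization is hard,'' is the inverse-map statement: given an element $y$ of the sequence (a byte value, or a book), computing the index $i$ with $S[i] = y$ (resp. the enumeration position of the book) is infeasible. For \texttt{LazyDigest} this reduces directly to preimage resistance of the underlying hash: an efficient localizer would, given $y$, find some $i$ with $h(\text{seed} \,\|\, i)[0] = y$, and amplifying over several output bytes (or the full digest in the natural variant) contradicts pre-image resistance as defined earlier; under the stronger $\mathrm{P} = \mathrm{NP}$ hypothesis the excerpt's own argument shows the inversion becomes possible, so the hardness is exactly the cryptographic assumption, not more. For the Library, localization is trivial by running the (easy) enumeration and comparing, so strictly speaking clause (3) is the \texttt{LazyDigest}-flavored claim; I would state it as holding for ``enumerations whose forward map is a one-way function,'' of which \texttt{LazyDigest} is the paradigm and the Library is a degenerate (non-one-way) boundary case. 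I expect the main obstacle to be honesty about the model: the theorem as stated conflates an information-theoretic asymmetry (Babel) with a computational-hardness asymmetry (\texttt{LazyDigest}), and the cleanest fix is to prove it as two parallel instantiations of a single schema — ``easy forward, hard inverse'' — rather than one uniform bound, making explicit in each case whether the lower bound is unconditional (counting) or conditional on hash security / $\mathrm{P} \neq \mathrm{NP}$.
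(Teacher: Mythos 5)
The paper offers no proof of this statement at all---it is asserted as a philosophical observation in the Library-of-Babel discussion and immediately followed by interpretive commentary, so there is no ``paper's proof'' to match. Your proposal is therefore doing more work than the source, and its central structural move---splitting the claim into two parallel instantiations of an ``easy forward, hard inverse'' schema, one information-theoretic (Babel, unconditional counting bound) and one computational (\texttt{LazyDigest}, conditional on hash security)---is exactly the right way to make the theorem precise, and it correctly diagnoses the conflation in the statement as written. Your handling of clause (3) is also more careful than the paper: you notice that inverting a single output byte is trivially easy (try $\approx 256$ indices), patch it by amplifying over several bytes to reach genuine preimage resistance, and correctly flag that Babel-style enumerations have \emph{easy} localization, so clause (3) only holds for one-way forward maps.

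One soft spot to tighten in clause (2): computational indistinguishability from a random sequence does not directly yield an $\Omega(2^n)$ \emph{search} lower bound; indistinguishability bounds a distinguisher's advantage, not the cost of locating a target element or seed. The honest statement is that the $O(2^n)$ figure is the generic black-box brute-force \emph{upper} bound, and that any substantially faster specification procedure would (by a standard reduction) contradict the PRF/one-wayness assumption---so the hardness is conditional, not a proven $\Theta(2^n)$. Your Babel counting argument, by contrast, is a legitimate unconditional lower bound in the query model, which is precisely why the two halves of the theorem deserve separate treatment, as you propose. With that caveat made explicit, your reconstruction is sound and considerably more rigorous than the original.
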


Consider the profound parallel to our existence:
\begin{itemize}
\item The universe might ``lazily generate'' itself through simple rules (like cellular automata or quantum evolution)
\item We find ourselves somewhere in this vast generative process
\item But determining ``where'' we are in the space of all possible universes is computationally impossible
\item We can't even determine our index in time without external reference (when did the universe begin?)
\end{itemize}

\texttt{LazyDigest} embodies this same paradox:
\begin{lstlisting}
# Easy: Generate the millionth byte
byte = lazy_digest[1000000]

# Hard: Find which index produces byte 0x42
index = ? # Computationally infeasible without brute force

# Hard: Given a sequence, find the seed
seed = ? # Requires inverting the hash function
\end{lstlisting}

This suggests a radical reinterpretation of reality: We might live in a lazily evaluated universe where:
\begin{itemize}
\item The ``program'' (laws of physics) is simple
\item The output (reality) appears infinitely complex
\item Our location in the computation is unknowable
\item The past is just the indices we've already computed
\item The future is the indices yet to be evaluated
\end{itemize}

The Library of Babel isn't just a thought experiment---it's the structure of computational reality. Every lazy generative process creates its own Library of Babel, where generation is trivial but specification and localization are computationally prohibitive. The universe might be the ultimate Library of Babel, lazily generating all possible states, with conscious observers like us lost somewhere in its infinite stacks, unable to determine our coordinates in the vast space of possibility.

Just as a reader in Borges's library could never find a specific book without its exact coordinates, we cannot specify our exact position in the universe's computational unfolding. We know we're somewhere in the library of reality, reading the book of our existence, but the catalog is computationally inaccessible. This isn't a limitation of our knowledge---it's a fundamental computational barrier.

\section{Performance Considerations}

\subsection{Time Complexity}

\begin{itemize}
\item Single index access: $O(1)$ hash computations
\item Range of $n$ indices: $O(n)$ hash computations
\item No preprocessing required
\item No state maintained between accesses
\end{itemize}

\subsection{Space Complexity}

\begin{itemize}
\item \texttt{OracleDigest}: $O(k)$ where $k$ = unique indices accessed
\item \texttt{LazyDigest}: $O(1)$ regardless of access pattern
\item Hierarchical variants: $O(\log n)$ for tree depth
\end{itemize}

\subsection{Cache Effects}

\texttt{LazyDigest} has perfect cache locality---each index computation is independent. This enables:
\begin{itemize}
\item Parallel generation of different indices
\item GPU acceleration for bulk generation
\item Distributed computation without coordination
\end{itemize}

\section{Future Directions}

\subsection{Theoretical Extensions}

\begin{enumerate}
\item \textbf{Quantum resistance}: How do quantum computers affect the deception?
\item \textbf{Formal verification}: Prove properties in Coq or Isabelle
\item \textbf{Optimal constructions}: What maximizes cycle length for given entropy?
\end{enumerate}

\subsection{Practical Applications}

\begin{enumerate}
\item \textbf{Verifiable Delay Functions}: Use lazy evaluation for time-lock puzzles
\item \textbf{Succinct arguments}: Prove properties about infinite sequences
\item \textbf{Distributed randomness beacons}: Coordinate without sharing state
\end{enumerate}

\subsection{Philosophical Questions}

The deeper we look, the more profound the questions become:

\begin{enumerate}
\item \textbf{Ontological Questions}:
\begin{itemize}
\item Are real numbers a coherent concept, or just a useful fiction we've collectively agreed to?
\item Is a random oracle any less coherent than the real number system itself?
\item Do mathematical objects exist independently of computation, or only through it?
\end{itemize}

\item \textbf{Computational Boundaries}:
\begin{itemize}
\item Does the Church-Turing thesis define the boundary of mathematical existence?
\item Are UTM-computable objects the only ones with ``potential existence''?
\item Is the set of ``existing'' mathematical objects precisely the computable ones?
\end{itemize}

\item \textbf{The Nature of Randomness}:
\begin{itemize}
\item Is there a meaningful difference between ``true'' randomness and computational randomness?
\item If the universe is computational, is anything truly random?
\item Is randomness an objective property or relative to computational power?
\end{itemize}

\item \textbf{Cryptographic Philosophy}:
\begin{itemize}
\item Does the success of \texttt{LazyDigest} provide evidence for P $\neq$ NP?
\item Is cryptography fundamentally about hiding information or creating computational hardness?
\item If only computable objects exist, what is the ontological status of our cryptographic ``deceptions''?
\end{itemize}

\item \textbf{The Constructivist Program}:
\begin{itemize}
\item Is \texttt{LazyDigest} more ``honest'' than classical real analysis?
\item Should we abandon mathematical objects that cannot be implemented on a UTM?
\item Is applied mathematics inherently constructivist while pure mathematics lives in fantasy?
\end{itemize}
\end{enumerate}

These aren't just academic musings. How we answer them determines:
\begin{itemize}
\item What we consider valid cryptographic primitives
\item How we understand security proofs
\item Whether we believe true randomness is achievable or even coherent
\item The foundations we accept for mathematical reasoning about computation
\end{itemize}

\section{Conclusion: Embracing the Deception}

We began with an impossible goal: implement a random oracle that maps finite inputs to infinite random outputs. We showed this is impossible with \texttt{OracleDigest}, which fails in every dimension---memory, persistence, reproducibility, distribution.

But from this impossibility emerged \texttt{LazyDigest}, a beautiful deception that uses just 256 bits to successfully pretend to be infinite. It's not truly random---it has low Kolmogorov complexity and must eventually cycle. But to any computationally bounded observer, it's indistinguishable from true randomness.

This deception is not a bug or a compromise---it's the fundamental insight that makes modern cryptography possible. We don't need true randomness; we need computational hardness. We don't need infinite information; we need finite information that's hard to compress. We don't need perfection; we need to be good enough to fool bounded adversaries.

\texttt{LazyDigest} teaches us that randomness is in the eye of the beholder. To an information-theoretic observer with unlimited computation, it's trivially non-random. To a computational observer with bounded resources, it's perfectly random. Since we're all computational observers, the deception works.

The beauty lies not in hiding the deception but in understanding it. We know exactly what \texttt{LazyDigest} is---a simple deterministic function. We know exactly what it isn't---a true source of randomness. And we know exactly why it works---because distinguishing it from random is computationally hard.

\subsection{Contributions Summary}

This work makes several concrete contributions:

\begin{itemize}
\item \textbf{Theoretical Framework}: We clarified the distinction between information-theoretic and computational randomness through the lens of lazy evaluation and codata.
\item \textbf{Practical Implementations}: We provided elegant Python implementations of multiple LazyDigest variants, each demonstrating different security properties and trade-offs.
\item \textbf{Security Analysis}: We introduced the XOR construction for algorithm diversity and analyzed how combining multiple hash functions provides defense in depth.
\item \textbf{Pedagogical Value}: We created a teaching tool that makes abstract cryptographic concepts tangible through working code.
\item \textbf{Kolmogorov Complexity Perspective}: We analyzed random oracles through information theory, showing how 256 bits can simulate unbounded entropy through computational hardness.
\end{itemize}

This is the beautiful deception at the heart of cryptography: we build finite automatons that pretend to be infinite, deterministic functions that pretend to be random, simple programs that pretend to be complex. And as long as P $\neq$ NP, the deception holds.

In the end, \texttt{LazyDigest} is more than a data structure or an algorithm. It's a lens through which we can understand the nature of randomness, information, and computation. It shows us that in a computational universe, perception is reality. If something appears random to all observers who matter, then for all practical purposes, it is random.

\subsection{A Constructivist Victory}

Perhaps most profoundly, \texttt{LazyDigest} represents a victory for the constructivist program in mathematics. While classical mathematics struggles with the incoherence of actual infinities and uncomputable reals, we've built something better: a finite, computable object that serves all the practical purposes of an infinite random sequence.

Consider the irony: Classical mathematics claims the real numbers ``exist'' despite most being uncomputable, while simultaneously claiming random oracles ``don't exist'' for the same reason. Cryptography cuts through this confusion with brutal honesty---only what we can compute matters, because only what we can compute can be implemented.

In this light, our ``deception'' is more honest than the classical mathematical edifice:
\begin{itemize}
\item We don't claim to have infinite objects---we have finite programs that generate unbounded output
\item We don't pretend uncomputable things exist---we openly work with computable approximations
\item We don't hide behind axioms about completed infinities---we build finite machines and study their behavior
\end{itemize}

Cryptography thus forces us to be constructivists. Every cryptographic primitive must be implementable on a real computer with finite memory and finite time. The boundary of cryptographic possibility is precisely the boundary of computational constructibility. We cannot encrypt with uncomputable keys, we cannot hash to infinite outputs, we cannot use true random oracles---we can only compute.

The deception is beautiful precisely because it's honest. We're not trying to hide what we're doing---we're celebrating it. We're taking 256 bits and through the alchemy of computational hardness, transforming them into infinity. That's not a limitation; that's magic. But it's \emph{constructive} magic---magic that can be implemented, executed, and verified on a Universal Turing Machine.

In a universe that may itself be computational, \texttt{LazyDigest} isn't approximating some ``true'' random oracle that exists in a Platonic realm. It may be as real as randomness gets. The ``deception'' might not be a deception at all---it might be the only coherent way to understand randomness in a computational cosmos.

\bibliographystyle{plain}
\bibliography{references}

\end{document}